\documentclass[authoryear,12pt,letter,3p]{jowarticle}
\usepackage{graphicx}
\usepackage{amsmath}
\usepackage{amssymb}
\usepackage{setspace}
\usepackage[all]{xy}
\usepackage{amsthm}
\usepackage[shortlabels]{enumitem}
%\usepackage{tikz-cd}
%\usepackage{tikz}
% ----------------------------------------------------------------
\vfuzz2pt % Don't report over-full v-boxes if over-edge is small
\hfuzz2pt % Don't report over-full h-boxes if over-edge is small
\setcounter{secnumdepth}{2}
\makeatletter
\renewcommand\section{\@startsection{section}{1}{\z@}{-3.25ex plus -1ex minus -.2ex}{1.5ex plus .2ex}{\normalsize\bf}}
\renewcommand\subsection{\@startsection{subsection}{2}{\z@}{-3.25ex plus -1ex minus -.2ex}{1.5ex plus .2ex}{\normalsize\bf}}
\renewcommand\subsubsection{\@startsection{subsubsection}{3}{\z@}{-3.25ex plus -1ex minus -.2ex}{1.5ex plus .2ex}{\normalsize\bf}}
\makeatother

\newtheorem{thm}{Theorem}

\newtheorem{prop}[thm]{Proposition}

% MATH -----------------------------------------------------------

\begin{document}

\begin{frontmatter}
\title{A Brief Comment on Maxwell(/Newton)[-Huygens] Spacetime}
\author{James Owen Weatherall}\ead{weatherj@uci.edu}
\address{Department of Logic and Philosophy of Science\\ University of California, Irvine, CA 92697}
\begin{abstract} I provide an alternative characterization of a ``standard of rotation'' in the context of classical spacetime structure that does not refer to any covariant derivative operator.
\end{abstract}
\begin{keyword}Classical spacetime; Maxwellian spacetime; Maxwell-Huygens spacetime; Newton-Huygens spacetime
\end{keyword}
\end{frontmatter}

\doublespacing

Following recent work by Simon \citet{Saunders} and Eleanor \citet{KnoxNEP}, a flurry of recent papers have addressed the question of how to understand the geometry presupposed by Newtonian gravitational theory, particularly in light of Corollary VI to the Laws of Motion in Newton's \emph{Principia} \citep{WeatherallSaunders, WallaceEFG, WallaceMPNC, TehRecovery, DewarMaxwell}.\footnote{Of course, there is an older literature addressing closely related issues concerning Corollary VI---see, in particular, \citet{SteinNST,SteinPrehistory} and \citet{DiSalle}.  There is also a literature on the closely related question of how to understand the relationship between ``ordinary'' Newtonian gravitation and ``geometrized'' Newtonian gravitation, also known as Newton-Cartan theory: see, for instance, \citet{GlymourTE}, \citet{KnoxFormulation}, and \citet{WeatherallEquivalence}.}  At issue has been the relationship between (1) Saunders' proposal that one can (and should) take the ``correct'' geometry for Newtonian gravitational theory to be that of what \citet{EarmanWEST} called ``Maxwellian spacetime'', and which more recently has been called ``Newton-Huygens spacetime'' \citep{Saunders} or ``Maxwell-Huygens spacetime'' \citep{WeatherallSaunders}, and (2) Knox's proposal that Corollary VI should motivate a move to geometrized Newtonian gravitation (i.e., Newton-Cartan theory).

One (somewhat tangential) thread of this discussion has concerned how to best characterize Maxwellian spacetime, which is supposed to be endowed with spatial and temporal metric structure and with a standard of rotation for smooth vector fields, but which is \emph{not} supposed to pick out a preferred class of inertial trajectories---i.e., Maxwellian spacetime carries something less than a full affine structure.  When \citet{EarmanWEST} introduced Maxwellian spacetime, he defined it using an equivalence class of covariant derivative operators all agreeing on which smooth timelike vector fields are non-rotating;\footnote{Here and in what follows, we consider only torsion-free derivative operators.} \citet{WeatherallSaunders} adopted the same definition.  But one might worry that this approach is problematic.  The worries are that (a) defining Maxwellian spacetime by taking an equivalence class of derivative operators makes reference to structure that one does not attribute to spacetime; and (b) that there is a more direct ``Kleinian'' characterization of the intended structure that, one might think, captures the intrinsic geometry more effectively than introducing more structure than one wants and then equivocating.\footnote{There is an issue, here, which is that alternative approaches all begin with a coordinate system, and then introduce a class of coordinate transformations that leave some structure invariant---a strategy that I understand as introducing extra structure---the coordinate system---and then removing it by taking equivalence classes.  But I will not address this point in what follows.}  More generally, \citet{WallaceCoordinates} has suggested that the example of Maxwellian spacetime, defined using an equivalence class of derivative operators, shows that coordinate-free methods are not an intuitive way of characterizing certain spatiotemporal structures.

I do not want to argue about the relative merits of different ways of characterizing geometry.  But it does seem to me that in the discussions just described, some weight has been placed on a particular presentational choice, originating with Earman but repeated by several others, even though other, perhaps more attractive, choices are available.  In particular, one can characterize a ``standard of rotation'' in just the sense that Earman and others discuss, in a fully covariant, coordinate-free manner, without ever introducing covariant derivative operators and with no equivalence classes in sight.\footnote{One might ask: could one do a similar thing in the case of a nondegenerate metric?  (Or, put more baldly, why is this not a standard notion already?)  The answer is ``yes'', but it is trivial, since every pseudo-Riemannian metric is compatible with a unique torsion-free derivative operator, and so one automatically gets \emph{more} than a standard of rotation from the metric alone.}  This structure permits an alternative characterization of Maxwellian spacetime that avoids the worries mentioned above; it should be of interest irrespective of one's preferences concerning transformation-based and coordinate-free methods, insofar as it provides an intrinsic characterization of the relevant structure.  My purpose in this short note is to show how this works.\footnote{\label{Dewar} Of course, this alternative formulation of Maxwellian spacetime only draws more attention to the question of whether this structure is sufficient to formulate Newtonian gravitational theory.  One would like to find a coordinate-free presentation of the theory that makes use of precisely Maxwellian spacetime, as characterized here, and nothing else---a version, say, of Neil Dewar's ``Maxwell gravitation'' expressed using only a standard of rotation, \citep{DewarMaxwell}.  I do not attempt that here, though see footnote \ref{DewarRevisited} and the surrounding discussion for a first step in that direction.}  Along the way, I make some remarks about spatial geometry in classical spacetime structures that may be of independent interest.

In what follows, let $M$ be a smooth four-manifold.\footnote{We assume all of the manifolds we consider are connected, paracompact, and Hausdorff.}   A \emph{temporal metric} on $M$ is a closed, non-vanishing one-form $t_a$; a \emph{spatial metric} on $M$ is a smooth, symmetric tensor field $h^{ab}$, which admits, at each point, a collection of four vectors $\overset{i}{\sigma}_a$, for $i=0,1,2,3$, such that $h^{ab}\overset{i}{\sigma}_a\overset{j}{\sigma}_b=1$ if $i=j=1,2,3$ and $0$ otherwise.  A temporal metric $t_a$ and spatial metric $h^{ab}$ are \emph{compatible} if $h^{ab}t_a=\mathbf{0}$.\footnote{For a discussion of these notions, including an account of why the term ``metric'' is appropriate in each case, see \citet[\S 4.1]{MalamentGR}.}  In what follows, we will limit attention to spatial metrics that are compatible with some temporal metric (or other).  We will say that a covariant derivative operator $\nabla$ on $M$ is \emph{compatible} with temporal and spatial metrics $t_a$ and $h^{ab}$ if $\nabla_a t_b=\mathbf{0}$ and $\nabla_a h^{bc}=\mathbf{0}$.

Fix a spatial metric $h^{ab}$ on $M$.  We will say that a vector $\xi^a$ at a point $p$ in $M$ is \emph{timelike} if there exists a non-vanishing covector $\tau_a$ such that $h^{ab}\tau_b=0$ and $\xi^a\tau_a\neq 0$; otherwise it is \emph{spacelike}.\footnote{Observe that we have defined our notion of timelike and spacelike in a way that does not refer to a temporal metric.}  It follows immediately that, at any point $p$, the spacelike vectors at $p$ form a three dimensional subspace $S_pM$ of the tangent space at $p$, $T_pM$.  Given a temporal metric $t_a$, a timelike vector $\xi^a$ will be called \emph{unit} if $|\xi^at_a|=1$.

Let us now introduce the following notation.\footnote{This sort of ``mixed index'' notation is a generalization of the abstract index notation; it is described in more detail in, for instance, \citet{WeatherallFBYMGR}; see also \citet{GerochPDE}.}  Instead of using the usual Latin indices, we will write, for spacelike vectors and vector fields, underlined Latin indices, so that a spacelike vector $\xi$ will be written $\xi^{\underline{a}}$.  Likewise, given a linear functional $\lambda$ acting on spacelike vectors, we will write $\lambda_{\underline{a}}$.  Finally, we can consider tensor fields with (some) underlined indices, as in $\lambda^{a\underline{b}}{}_{\underline{c}d}$: in such cases, an underlined index appearing in a contravariant (raised) position indicates that, for any covector $\tau_a$, if $h^{ab}\tau_b=\mathbf{0}$, then $\tau_a$, contracted with that index, yields zero; meanwhile an underlined index appearing in a covariant (lowered) position indicates that the relevant action is restricted to spacelike vectors (i.e., it is not defined for timelike vectors).  Note that we may always freely remove the lines under contravariant indices, since every spacelike vector at a point is in particular a vector at that point; and we may freely add lines under any covariant (lowered) indices, since every linear functional on tangent vectors at a point may be restricted to spacelike vectors at that point.  Hence, we may write $h^{ab}$ as $h^{\underline{a}\underline{b}}$ and, for any temporal metric $t_{a}$, we have $t_{\underline{a}}=\mathbf{0}$.  But we cannot generally add lines under contravariant indices, since not all tangent vectors are spacelike, and we cannot remove them from covariant indices, since linear functionals on spacelike vectors will not have unique extensions to all tangent vectors.  We will call underlined indices \emph{spatial indices}.

Given the structure defined so far, one can make sense of a \emph{spatial derivative operator} $D$ on $M$, which gives a standard for differentiation of smooth fields with (only) spatial indices in spacelike directions.  I make this idea precise below, but the details are not essential for stating the main claim.  The basic fact about spatial derivative operators that matters for what follows, proved in Prop. \ref{uniqueness} below, is that given a spatial metric $h^{ab}$, there exists a unique spatial derivative operator $D$ with the property that $D_{\underline{a}}h^{\underline{b}\underline{c}}=\mathbf{0}$.  Thus the structure already defined determines a unique spatial derivative operator, in much the same way that a pseudo-Riemannian metric determines a unique derivative operator.\footnote{Note the difference from the presentation in \citet[\S4.1]{MalamentGR}: he defines a spatial derivative operator, but does so only relative to (1) a specific temporal metric $t_a$ and (2) a unit timelike vector field $\xi^a$; moreover, the spatial derivative operator he defines acts, in principle, on arbitrary smooth tensor fields on $M$.  There is nothing wrong with this, of course, and I make use of the same construction in the Proof of Prop. \ref{uniqueness}.  But it perhaps obscures the sense in which we get a unique spatial derivative operator from the spatial geometry alone, and given the purpose of the present note, it seems judicious to avoid any appearances of invoking structure beyond what is strictly needed.}

We can now make the central point.  Fix a temporal metric $t_a$ in addition to $h^{ab}$.  A \emph{standard of rotation} compatible with $t_a$ and $h^{ab}$ is a map ${\circlearrowright}$ from pairs $(x,\xi^a)$, where $\xi^a$ is a smooth vector field on $M$ and $x$ is an index distinct from $a$, to smooth, antisymmetric tensor fields ${\circlearrowright}^x\xi^a = {\circlearrowright}^{[x}\xi^{a]}$ on $M$, satisfying the following conditions:
\begin{enumerate}
\item Given any two smooth vector fields $\xi^a$ and $\eta^a$, ${\circlearrowright}^n(\xi^a + \eta^a) = {\circlearrowright}^n\xi^a + {\circlearrowright}^n\eta^a$;
\item Given any smooth vector field $\xi^a$ and any smooth scalar field $\alpha$, ${\circlearrowright}^n(\alpha\xi^a)=\alpha{\circlearrowright}^n\xi^a + \xi^{[a} d^{n]}\alpha$;\footnote{The operator $d$ is the exterior derivative.  Here and throughout, we raise indices on derivative operators with the spatial metric $h^{ab}$, i.e., $d^n\alpha=h^{mn}d_n\alpha$.}
\item Given any smooth vector field $\xi^a$, if $d_a(\xi^n t_n)=\mathbf{0}$, then ${\circlearrowright}^n\xi^a$ is spacelike in both indices;
\item Given any smooth spacelike vector field $\sigma^{\underline{a}}$, ${\circlearrowright}^{\underline{n}}\sigma^{\underline{a}} =D^{[\underline{n}}\sigma^{\underline{a}]}$; and
\item Given any smooth vector field $\xi^a$, $\delta^a{}_m{\circlearrowright}^n\xi^m = {\circlearrowright}^n(\delta^a{}_m\xi^m)$.
\end{enumerate}

We then have the following proposition characterizing standards of rotation.
\begin{prop}\label{rotation}Let $M$ be a smooth, connected, paracompact, Hausdorff four-manifold, and let $t_a$ and $h^{ab}$ be temporal and spatial metrics on $M$, respectively.\footnote{This result can be extended to other dimensions, but it is not clear that the extra generality is of any interest.}  Then the following hold.
 \begin{enumerate}[(1)]
 \item \label{derivtorot} Given any covariant derivative operator $\nabla$ on $M$ compatible with $t_a$ and $h^{ab}$, there exists a unique standard of rotation ${\circlearrowright}$ on $M$, also compatible with $t_a$ and $h^{ab}$, such that for any smooth vector field $\xi^a$, ${\circlearrowright}^n\xi^a = \nabla^{[n}\xi^{a]} (:= h^{m[n}\nabla_m\xi^{a]})$;
 \item \label{rottoderiv} Given any standard of rotation ${\circlearrowright}$ on $M$ compatible with $t_a$ and $h^{ab}$, there exists a covariant derivative operator on $M$, also compatible with $t_a$ and $h^{ab}$, such that for any smooth vector field $\xi^a$, ${\circlearrowright}^n\xi^a = \nabla^{[n}\xi^{a]}$.
 \item \label{flatness} Given a standard of rotation ${\circlearrowright}$ on $M$ compatible with $t_a$ and $h^{ab}$, if (a) $h^{ab}$ is flat, in the sense that $D_{[\underline{c}}D_{\underline{d}]}\sigma^{\underline{a}}=\mathbf{0}$ for all spacelike vector fields $\sigma^{\underline{a}}$; and (b) there exists a unit timelike vector field $\eta^a$ such that (i) ${\circlearrowright}^n\eta^a = \mathbf{0}$ and (ii) $\mathcal{L}_{\eta}h^{ab}=\mathbf{0}$, then $\nabla$ in \ref{rottoderiv} can be chosen to be flat; conversely, if there exists a flat derivative operator $\nabla$ compatible with $t_a$ and $h^{ab}$, then $h^{ab}$ is flat and its induced standard of rotation in the sense of \ref{derivtorot} satisfies (b), at least locally.
 \item \label{nonuniqueness} Two derivative operators $\nabla$ and $\nabla'$, both compatible with $t_a$ and $h^{ab}$, determine the same standard of rotation in the sense of \ref{derivtorot} iff $\nabla=(\nabla',\sigma^{\underline{a}}t_bt_c)$ for some spacelike vector field $\sigma^{\underline{a}}$.  In particular, if the standard of rotation so-determined admits of any non-rotating unit timelike vector fields, $\xi^a$, this condition holds iff $\nabla^{[a}\xi^{b]}=\mathbf{0} \Leftrightarrow \nabla'^{[a}\xi^{b]}=\mathbf{0}$.
 \end{enumerate}
\end{prop}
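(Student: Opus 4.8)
The plan is to prove the four clauses in the order \ref{derivtorot}, \ref{rottoderiv}, \ref{nonuniqueness}, \ref{flatness}, since the argument for \ref{flatness} will use \ref{nonuniqueness}. For \ref{derivtorot}, uniqueness is automatic because the prescription ${\circlearrowright}^n\xi^a := \nabla^{[n}\xi^{a]} = h^{m[n}\nabla_m\xi^{a]}$ fully determines the map; the content is verifying conditions (1)--(5). Conditions (1) and (5) follow at once from $\Reals$-linearity and the tensorial character of the formula. Condition (2) is the Leibniz rule for $\nabla$, with the residual term emerging as $\xi^{[a}d^{n]}\alpha$ (using torsion-freeness, $\nabla_m\alpha = d_m\alpha$, and $d^n\alpha = h^{mn}d_m\alpha$). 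For (3): antisymmetrization annihilates the term containing $h^{ma}t_a = \mathbf{0}$, and what survives of $t_a\nabla^{[n}\xi^{a]}$ is $h^{mn}\nabla_m(\xi^at_a) = d^n(\xi^at_a)$ (using $\nabla_a t_b = \mathbf{0}$), which vanishes exactly when $d_a(\xi^nt_n) = \mathbf{0}$. For (4): if $\sigma^{\underline{a}}$ is spacelike then $t_a\nabla_m\sigma^a = \nabla_m(t_a\sigma^{\underline{a}}) = \mathbf{0}$, so $\nabla_m\sigma^a$ automatically carries a spacelike first index, and $\nabla_{\underline{m}}\sigma^{\underline{a}}$ is a spatial derivative operator with $\nabla_{\underline{m}}h^{\underline{b}\underline{c}} = \mathbf{0}$; by the uniqueness part of Prop.~\ref{uniqueness} it is $D$, so $\nabla^{[\underline{n}}\sigma^{\underline{a}]} = D^{[\underline{n}}\sigma^{\underline{a}]}$.

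For \ref{rottoderiv}, fix any derivative operator $\tilde\nabla$ compatible with $t_a$ and $h^{ab}$ (these exist since $t_a$ is closed; see \citet[\S4.1]{MalamentGR}) and let $\tilde{\circlearrowright}$ be the standard of rotation it induces via \ref{derivtorot}. Conditions (1)--(2) make $({\circlearrowright}-\tilde{\circlearrowright})^n\xi^a$ a $C^\infty(M)$-linear function of $\xi^a$ (the $\xi^{[a}d^{n]}\alpha$ terms cancel), so it equals $C^{na}{}_b\xi^b$ for a smooth tensor field $C^{na}{}_b = C^{[na]}{}_b$. Condition (3), applied to both operators after extending an arbitrary tangent vector to a field with $\xi^nt_n$ locally constant, gives $t_nC^{na}{}_b = t_aC^{na}{}_b = \mathbf{0}$, and condition (4) gives that $C^{na}{}_b$ annihilates spacelike vectors in the $b$-slot. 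Writing any tangent vector as $(t_cv^c)\xi^b + s^{\underline{b}}$ for a global unit timelike field $\xi^b$ (one exists, since $\{\xi : t_a\xi^a = 1\}$ is nonempty locally and convex, so a partition of unity produces a global section) yields $C^{na}{}_b = \kappa^{\underline{n}\underline{a}}t_b$ with $\kappa^{\underline{n}\underline{a}} = \kappa^{[\underline{n}\underline{a}]}$ smooth. Choosing a smooth antisymmetric $\tilde\kappa_{ab}$ with $h^{pa}h^{qb}\tilde\kappa_{ab} = \kappa^{pq}$ (possible because $\tilde\kappa_{ab}\mapsto h^{pa}h^{qb}\tilde\kappa_{ab}$ is a surjection onto the antisymmetric, doubly-spacelike $2$-tensors and hence admits a smooth right inverse), a direct computation shows $\nabla := (\tilde\nabla, 2h^{ad}t_{(b}\tilde\kappa_{c)d})$ is torsion-free, compatible with $t_a$ and $h^{ab}$, and satisfies $\nabla^{[n}\xi^{a]} = \tilde\nabla^{[n}\xi^{a]} + \kappa^{na}(t_b\xi^b) = {\circlearrowright}^n\xi^a$.

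For \ref{nonuniqueness}, ``$\Leftarrow$'' is a short calculation: if $\nabla = (\nabla', \sigma^{\underline{a}}t_bt_c)$ then $\nabla^n\xi^a - \nabla'^n\xi^a = h^{mn}\sigma^{\underline{a}}t_mt_b\xi^b = \mathbf{0}$ because $h^{mn}t_m = \mathbf{0}$, and $\sigma^{\underline{a}}t_bt_c$ is symmetric in $(bc)$, annihilated by $t_a$ on its upper index, and satisfies $h^{d(b}C^{c)}{}_{ad} = \mathbf{0}$, so $\nabla$ is again torsion-free and compatible with $t_a,h^{ab}$. For ``$\Rightarrow$'', write $\nabla = (\nabla', C^a{}_{bc})$; the requirements that both operators be torsion-free and compatible with $t_a, h^{ab}$, together with equality of their induced standards of rotation (equivalently $h^{m[n}C^{a]}{}_{mc} = \mathbf{0}$), are pointwise linear conditions on $C^a{}_{bc}$; in a frame $(\eta, e_1, e_2, e_3)$ with $\eta$ unit timelike and the $e_i$ $h$-orthonormal, compatibility forces the components with all indices spacelike to vanish and forces the components with one lowered $\eta$-index to be antisymmetric in their spacelike slots, while the rotation condition forces those same components to be symmetric; hence only the components $C^{\underline{a}}{}_{00} =: \sigma^{\underline{a}}$ survive, i.e.\ $C^a{}_{bc} = \sigma^{\underline{a}}t_bt_c$. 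The final clause is immediate one way (equal standards of rotation give $\nabla^{[a}\xi^{b]} = \nabla'^{[a}\xi^{b]}$ for every $\xi^a$); for the other, if $\xi^a$ is unit timelike with $\nabla^{[a}\xi^{b]} = \nabla'^{[a}\xi^{b]} = \mathbf{0}$ then, since $h^{m[a}C^{b]}{}_{mc}$ equals (for $a,b$ spacelike) $t_c$ times the purely spacelike part of the $2$-form governing the $\nabla,\nabla'$ difference, contracting with $\xi^c$ and using $\xi^ct_c = 1$ forces that part to vanish, whence $C^a{}_{bc} = \sigma^{\underline{a}}t_bt_c$ by the previous step.

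For \ref{flatness}, in the forward direction begin with the $\nabla$ from \ref{rottoderiv}. Since $\mathcal{L}_\eta h^{ab} = -2h^{c(a}\nabla_c\eta^{b)} = \mathbf{0}$, we have $h^{ca}\nabla_c\eta^b = h^{c[a}\nabla_c\eta^{b]} = {\circlearrowright}^a\eta^b = \mathbf{0}$, and then (using that $\eta^bt_b$ is constant, so $t_b\nabla_c\eta^b = \mathbf{0}$) $\nabla_c\eta^b = t_c\beta^{\underline{b}}$ for a smooth spacelike $\beta^{\underline{b}}$; modifying $\nabla$ by $\mp\beta^{\underline{a}}t_bt_c$, which by \ref{nonuniqueness} alters neither ${\circlearrowright}$ nor the compatibility, makes $\eta$ parallel. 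Now pick local coordinates $(\tau, y^1, y^2, y^3)$ with $\eta = \partial_\tau$ and (using that $t_a$ is closed with $t_a\eta^a = \pm 1$) $t_a = \pm d_a\tau$; then $\nabla\eta = \mathbf{0}$ kills every Christoffel symbol with a lowered $\tau$-index, $\nabla t_a = \mathbf{0}$ gives $\Gamma^\tau{}_{bc} = \mathbf{0}$, and $h^{a\tau} = h^{ab}t_b = \mathbf{0}$, so the only surviving symbols $\Gamma^i{}_{jk}$ ($i,j,k$ spacelike) are exactly those of the spatial operator $D$ along each slice $\tau = \mathrm{const}$; by hypothesis (a) this $D$ is flat, hence trivializable by a further purely spatial coordinate change, after which all Christoffel symbols vanish and $\nabla$ has vanishing curvature near the point, hence everywhere. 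Conversely, from a flat compatible $\nabla$ one recovers $D$ by restriction (Prop.~\ref{uniqueness}), so $D_{[\underline{c}}D_{\underline{d}]}\sigma^{\underline{a}}$ is, up to sign, the restriction of $R^a{}_{bcd}\sigma^b = \mathbf{0}$ to spacelike indices, which is (a); and near any point a $\nabla$-parallel field $\eta^a$ with $\eta^at_a = 1$ (parallel transport gives it, $\eta^at_a$ being constant since $\nabla t_a = \mathbf{0}$) is unit timelike with ${\circlearrowright}^n\eta^a = \nabla^{[n}\eta^{a]} = \mathbf{0}$ and $\mathcal{L}_\eta h^{ab} = \mathbf{0}$, which is (b) locally. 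The main obstacle is \ref{rottoderiv}: pinning down from conditions (3)--(4) alone that $({\circlearrowright}-\tilde{\circlearrowright})$ has the rigid form $\kappa^{\underline{n}\underline{a}}t_b$ and then realizing it as the antisymmetrized action of an honest compatible, torsion-free $\nabla$; this also drives the parallelization step in \ref{flatness}, whereas the frame computation in \ref{nonuniqueness} and the coordinate computation in \ref{flatness} are longer but routine once the right frame or chart is fixed.
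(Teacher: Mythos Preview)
Your proof is correct and follows the same overall architecture as the paper's: in each part the key construction is identical (the difference tensor $C^a{}_{bc}$, the adjustment $2h^{an}t_{(b}\kappa_{c)n}$ in \ref{rottoderiv}, and the modification by $\sigma^{\underline{a}}t_bt_c$ to make $\eta^a$ parallel in \ref{flatness}). There are, however, a few genuine differences in execution worth noting.

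For \ref{rottoderiv}, you proceed more structurally: you first prove that the difference of two standards of rotation is $C^\infty$-linear and then pin down its form $\kappa^{\underline{n}\underline{a}}t_b$ from axioms (3)--(4), before writing down $\nabla$. The paper instead fixes a unit timelike $\eta^a$, builds $\kappa_{ab}$ directly from $\overline{\nabla}^m\eta^n - {\circlearrowright}^m\eta^n$, and verifies the result by decomposing an arbitrary $\xi^a$ as $\alpha\eta^a + \sigma^a$. Your route isolates exactly what freedom separates two standards of rotation; the paper's is shorter but less illuminating about \emph{why} the construction works. For \ref{nonuniqueness} you do a frame computation, whereas the paper invokes the known classification of compatible derivative operators (they differ by $2h^{an}t_{(b}\kappa_{c)n}$) and extracts $\kappa^{ab}=\mathbf{0}$ by evaluating on a unit timelike field---slicker, but it leans on \citet{MalamentGR}. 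For \ref{flatness} you build adapted coordinates $(\tau,y^i)$ and kill the remaining Christoffel symbols by spatial flatness; the paper instead cites \citet[Prop.~4.2.4]{MalamentGR} (spatially flat plus a parallel unit timelike field implies flat). Your argument is self-contained; the paper's is shorter. Finally, for condition (4) in \ref{derivtorot} you invoke the uniqueness half of Prop.~\ref{uniqueness} directly, which is cleaner than the paper's projector computation.
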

\begin{proof}
To establish \ref{derivtorot}, it is sufficient to show that if $\nabla$ is compatible with $t_a$ and $h^{ab}$, then ${\circlearrowright}:(\xi^a,x)\mapsto \nabla^{[n}\xi^{a]}$ satisfies the five conditions above.  (Uniqueness is immediate, since this definition determines the action of ${\circlearrowright}$ on all smooth vector fields.)  The first, second, and fifth are immediate consequences of the properties of derivative operators.  For the third, observe that since $\nabla$ is compatible with $t_a$, given any smooth vector field $\xi^a$ satisfying $d_a(\xi^nt_n)=0$, we have $0=\nabla_a(\xi^n t_n)=t_n\nabla_a\xi^n$.  Hence $t_n(h^{m[n}\nabla_m\xi^{a]})=-t_n(h^{m[a}\nabla_m\xi^{n]})=\mathbf{0}$, and so $\nabla^{[n}\xi^{a]}$ is spacelike in both indices.  Finally, for the fourth, let $\eta^a$ be a unit timelike vector field and let $\hat{h}_{ab}$ be the spatial projector relative to $\eta^a$, i.e., the unique symmetric field such that $\hat{h}_{ab}\eta^b=\mathbf{0}$ and $h^{ab}\hat{h}_{bc} = \delta^a{}_c - \eta^at_c$..  (Unit timelike vector fields always exist. For instance, choose any Riemannian metric $g_{ab}$ on $M$ and let $\eta^a=k g^{ab}t_b$, where $k$ is chosen so that $\eta^at_a=1$.)   Now observe that, as established in the proof of Prop. \ref{uniqueness} below, for any smooth spacelike vector field $\sigma^{\underline{a}}$, $D_{\underline{n}}\sigma^{\underline{a}}=\hat{h}_{\underline{n}m}\hat{h}^{\underline{a}}{}_{x}\nabla^{m}\sigma^x$, where $\hat{h}_{ab}$ is the spatial projection field determined by any smooth, unit timelike vector field $\eta^a$.  Hence $h^{\underline{m}[\underline{n}}D_{\underline{m}}\sigma^{\underline{a}]}=h^{\underline{m}[\underline{n}} \hat{h}_{\underline{m}\underline{y}}\hat{h}^{\underline{a}]}{}_{x}\nabla^{\underline{y}}\sigma^x=h^{m[n}\nabla_m\sigma^{a]}$.

To establish \ref{rottoderiv}, we first show that there always exists a covariant derivative operator compatible with $t_a$ and $h^{ab}$.  Again, let $\eta^a$ be a unit timelike vector field.  Now let $\nabla'$ be an arbitrary derivative operator.  (Again, such exist, by paracompactness.)  From here we proceed in two steps.  First, we define a new derivative operator $\tilde{\nabla}=(\nabla',-\eta^a\nabla'_b t_c)$, which satisfies $\tilde{\nabla}_a t_b = \mathbf{0}$.\footnote{Here we use the fact that the action of any (torsion-free) derivative operator on arbitrary fields may be expressed using any other derivative operator and a smooth field $C^a{}_{bc}$, symmetric in $b$ and $c$.  For details, see \citet[\S1.7]{MalamentGR}.}  (Observe that since $t_a$ is closed, $\nabla_b t_c = \nabla_{(b}t_{c)}$, and so $-\eta^a\overline{\nabla}_b t_c$ has the appropriate symmetry properties.)  We now define a third derivative operator $\overline{\nabla}=(\tilde{\nabla},C^{a}_{bc})$, where
\[
C^a{}_{bc} = -\frac{1}{2}\left[h^{ar}\tilde{\nabla}_r h^{mn} - h^{nr}\tilde{\nabla}_r h^{am} - h^{mr}\tilde{\nabla}_r h^{na}\right]\hat{h}_{bm}\hat{h}_{cn} - t_{(b}\hat{h}_{c)n}\eta^r\tilde{\nabla}_r h^{an},\]
which one can show is compatible with $t_a$ and $h^{ab}$.

Finally, let ${\circlearrowright}$ be any standard of rotation, and define $\nabla = (\tilde{\nabla},2h^{an}t_{(b}\kappa_{c)n})$, where $\kappa_{ab}=\hat{h}_{n[a}\hat{h}_{b]m}\left(\nabla^m\eta^n - {\circlearrowright}^m\eta^n\right)$.  This final derivative operator, $\nabla$, is compatible with $t_a$ and $h^{ab}$; we claim that it is also such that for any smooth vector field $\xi^a$, $\nabla^{[n}\xi^{a]}={\circlearrowright}^n\xi^a$.  To see this, first observe that, by construction, ${\circlearrowright}^n\eta^a = \nabla^{[n}\eta^{a]}$.  (Here we use the fact that ${\circlearrowright}^a\xi^a$ is spatial in both indices to ensure that the action of $\hat{h}_{ab}$ on either index is invertible.)  Now fix any smooth vector field $\xi^a$ and observe that it may be written as $\xi^a = \alpha\eta^a + \sigma^a$, where $\sigma^a$ is spacelike and $\alpha$ is some smooth scalar field (possibly vanishing).  Then ${\circlearrowright}^n\xi^a = \alpha{\circlearrowright}^n\eta^a + \eta^{[a}d^{n]}\alpha + D^{\underline{n}}\sigma^{\underline{a}]} = \nabla^{[n}\xi^{a]}$, as desired.

We now establish \ref{flatness}.  Let ${\circlearrowright}$ on $M$ be compatible with $t_a$ and $h^{ab}$, and suppose conditions (a) and (b) are satisfied.  Since $h^{ab}$ is flat, any derivative operator compatible with it is spatially flat, in the sense that $R^{abcd}=h^{bn}h^{cm}h^{do}R^a{}_{bcd}=\mathbf{0}$.  It follows that if $\nabla$ is compatible with $h^{ab}$, and if there exists a unit timelike vector $\eta^a$ such that $\nabla_a\eta^b=\mathbf{0}$, then $\nabla$ is flat \citep[Prop. 4.2.4]{MalamentGR}.  So let $\eta^a$ be a unit timelike vector such that ${\circlearrowright}^n\xi^a=\mathbf{0}$ and $\mathcal{L}_{\eta}h^{ab}=\mathbf{0}$, and let $\tilde{\nabla}$ be a derivative operator compatible with $t_a$ and $h^{ab}$ whose standard of rotation agrees with ${\circlearrowright}$.  Then define $\nabla = (\tilde{\nabla},t_bt_c\eta^n\tilde{\nabla}_n\eta^a)$.  This derivative operator is compatible with $t_a$ and $h^{ab}$; moreover one can confirm that its standard of rotation agrees with $\tilde{\nabla}$ (and hence ${\circlearrowright}$).  It follows from the first of these facts that $\mathcal{L}_{\eta}h^{ab} = \eta^n\nabla_n h^{ab} - h^{an}\nabla_n\eta^b - h^{nb}\nabla_n\eta^a = -2\nabla^{(a}\eta^{b)} = \mathbf{0}$; and from the second, it follows that $\nabla^{[a}\eta^{b]}=\mathbf{0}$.  Thus $\nabla^a\eta^b=\mathbf{0}$.  But we also have, by construction, $\eta^n\nabla_n\eta^a = \mathbf{0}$, and so $\nabla_a\eta^b = \mathbf{0}$.  It follows that $\nabla$ is flat, compatible with the metrics, and determines standard of rotation ${\circlearrowright}$.  For the converse, observe simply that if $\nabla$ is flat and compatible with $t_a$ and $h^{ab}$, then it is spatially flat (so $h^{ab}$ must be flat) and there exists, at least locally, a constant unit timelike vector field $\eta^a$, which automatically satisfies $\nabla^a\eta^b=\mathbf{0}$.

Finally we establish \ref{nonuniqueness}.  Suppose $\nabla$ and $\nabla'$, both compatible with $t_a$ and $h^{ab}$, determine the same standard of rotation.
It follows that there is some antisymmetric tensor field $\kappa_{ab}$, such that $\nabla' = (\nabla,h^{an}t_{(b}\kappa_{c)n})$; and that for any unit timelike vector field $\xi^a$, $\nabla'^{[a}\xi^{b]}=\nabla^{[a}\xi^{b]}-2h^{o[b}h^{a]n}t_{(n}\kappa_{m)o}\xi^m=\nabla^{[a}\xi^{b]} - \kappa^{ab}$, and so $\kappa^{ab}=\mathbf{0}$.  Thus $\kappa_{ab}=t_{[a}\sigma_{b]}$, for some covector $\sigma_b$, and so $C^a{}_{bc}=\sigma^{\underline{a}} t_bt_c$ for some spacelike vector field $\sigma^{\underline{a}}$.  The final clause of \ref{nonuniqueness} follows immediately.
\end{proof}

We may now define \emph{Maxwellian spacetime} (or, \emph{Newton-Huygens spacetime} or \emph{Maxwell-Huygens spacetime}) as follows: it is a structure $(M,t_a,h^{ab},{\circlearrowright})$, where $M$ is a smooth manifold diffeomorphic to $\mathbb{R}^4$; $t_a$ is a temporal metric on $M$; $h^{ab}$ is a spatial metric on $M$; and ${\circlearrowright}$ is a standard of rotation compatible with $t_a$ and $h^{ab}$.  We further suppose that $t_a$ admits an integral $t:M\rightarrow \mathbb{R}$ (i.e., a smooth field $t$ such that $d_a t=t_a$) and is surjective and whose surfaces of constant value are diffeomorphic to $\mathbb{R}^3$;\footnote{The surjectivity of $t$ captures the idea that time goes on indefinitely in both directions; it is a kind of completeness for temporal metrics.} and that $h^{ab}$, restricted to each of these surfaces, is complete.\footnote{By completeness, here, we mean that the Riemannian metric induced on each of these surfaces is complete in the standard sense.}  Note that on this characterization, no equivalence classes are taken, and in particular, there is no need to refer to a derivative operator (or anything else not already definable from the structure mentioned).\footnote{One could equally well begin with a three dimensional affine bundle over $\mathbb{R}$, and then define $h^{ab}$, $t_a$, and ${\circlearrowright}$ precisely as above.}

Before proceeding, a remark is in order about just what a standard of rotation, in the present sense, allows one to do.  It is important to the discussions of Newtonian gravitation described above, and especially in the context of ``vector relationism'' as presented by \citet{Saunders},\footnote{The expression ``vector relationism'' was apparently coined by \citet{WallaceMPNC}, but to describe Saunders' proposal.} that in Maxwellian spacetime one has a well-defined notion of ``relative acceleration'', which is the rate of change along a timelike curve of a spacelike vector field representing the instantaneous relative velocity of two particles.  If one characterizes Maxwellian spacetime using an equivalence class of derivative operators, one can define this rate of change using any of the derivative operators in the equivalence class and then show that the resulting quantity does not depend on the choice.  But it turns out that one can likewise define a notion of the rate of change of a spacelike vector field in a timelike direction using only the structure of Maxwellian spacetime as we have defined it, without appealing to a derivative operator.\footnote{I am grateful to David Malament for raising this issue.}

In particular, fix a standard of rotation ${\circlearrowright}$ compatible with temporal and spatial metrics $t_a$ and $h^{ab}$ on $M$, let $\sigma^{\underline{a}}$ be a spacelike vector field on $M$, and let $\xi^a$ be a unit timelike vector at a point $p$.  We then define $\xi^n\triangle_n\sigma^{\underline{a}}$, the rate of change of $\sigma^a$ at $p$, in the direction of $\xi^a$, by:
\begin{equation}\label{restrictedDerivative}
\xi^n\triangle_n\sigma^{\underline{a}} = \mathcal{L}_{\xi} \sigma^{\underline{a}} + \sigma_n {\circlearrowright}^n \xi^a - \frac{1}{2} \sigma_n \mathcal{L}_{\xi} h^{an}.
\end{equation}
Here $\mathcal{L}_{\xi}$ is the Lie derivative taken with respect to \emph{any} extension of $\xi^a$ off of $p$, and $\sigma_n$ is any covector with the property that $h^{an}\sigma_n = \sigma^a$.\footnote{To get an intuitive handle on this expression, it is useful to think of the rate of change of $\sigma^{\underline{a}}$ in the direction of $\xi^a$ at $p$ as corresponding to the flow of $\sigma^{\underline{a}}$ along a vector field $\xi^a$ at $p$, corrected for the ways in which $\xi^a$ is itself changing, in the direction of $\sigma^{\underline{a}}$, at $p$.  The first term of Eq. \eqref{restrictedDerivative} captures the ``flow'', while the other two terms, corresponding to the rotation and expansion of the vector field $\xi^a$, describe the change in $\xi^a$ in the direction $\sigma^{\underline{a}}$.}   One can then show, by direct computation, that for any derivative operator $\nabla$ whose standard of rotation agrees with ${\circlearrowright}$ in the sense of Prop. \ref{rotation}, we have $\xi^n\nabla_n\sigma^{\underline{a}}=\xi^n\triangle_n\sigma^{\underline{a}}$.  In this sense, then, one can recover ``relative acceleration'' without introducing any structure beyond a standard of rotation.

I have intentionally written Eq. \eqref{restrictedDerivative} in a suggestive way.  Indeed, one could take this equation to define a new operator $\triangle$, which would be a kind of ``restricted derivative operator'' acting only on tensor fields with spatial indices.  Such an operator might then be used to develop a dynamical theory of spacelike vector fields.\footnote{\label{DewarRevisited} Recall footnote \ref{Dewar}.}  Alternatively, one could provide an abstract definition of $\triangle$, strongly analogous to the definition of the standard of rotation above, and take that operator as a primitive when defining Maxwellian spacetime. Then Eq. \eqref{restrictedDerivative} could be used to define a standard of rotation from a restricted derivative operator, from which one could recover a version of Prop. \ref{rotation} for restricted derivative operators.  This approach, too, would avoid any reference to equivalence classes of covariant derivative operators.

We now turn to making the idea of a spatial derivative operator precise.  A (torsion-free) \emph{spatial derivative operator} on $M$ is a map $D$ from pairs $(\underline{x},\alpha^{\underline{a}_1\cdots\underline{a}_n}_{\underline{b}_1\cdots\underline{b}_m})$, where $\underline{x}$ is an index distinct from all of $\underline{a}_1,\ldots,\underline{a}_n,\underline{b}_1,\ldots,\underline{b}_m$ and $\alpha^{\underline{a}_1\cdots\underline{a}_n}_{\underline{b}_1\cdots\underline{b}_m}$ is a smooth tensor field on $M$ with only underlined indices, to smooth tensors $D_{\underline{x}}\alpha^{\underline{a}_1\cdots\underline{a}_n}_{\underline{b}_1\cdots\underline{b}_m}$,\footnote{Observe that we are requiring that $D$ preserves the underlined character of all indices on $\alpha$.} satisfying the following conditions \citep[cf.][\S 1.7]{MalamentGR}:
\begin{enumerate}
\item $D$ commutes with addition of smooth tensor fields;
\item $D$ satisfies the Leibniz rule with respect to outer multiplication;
\item $D$ commutes with index substitution and with index contraction;
\item For all smooth scalar fields $\alpha$, $D_{\underline{a}}\alpha=d_{\underline{a}}\alpha$; and
\item If $\underline{x}$ and $\underline{y}$ are distinct, then for all smooth scalar fields $D_{\underline{x}}D_{\underline{y}}\alpha=D_{\underline{y}}D_{\underline{x}}\alpha$
\end{enumerate}

We then get the following result.
\begin{prop}\label{uniqueness}
Let $M$ be a smooth, connected, paracompact, Hausdorff four-manifold, and let $h^{ab}$ be a spatial metric on $M$ compatible with some temporal metric.  Then there exists a unique spatial derivative operator $D$ on $M$ such that $D_{\underline{a}}h^{\underline{b}\underline{c}}=\mathbf{0}$.
\end{prop}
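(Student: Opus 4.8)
The strategy is to transcribe, into the spatial setting, the usual proof that a (pseudo\nobreakdash-)Riemannian metric determines a unique compatible derivative operator; the only genuinely new point is to keep careful track of the fact that a spatial derivative operator acts only on fields all of whose indices are spatial. \emph{Existence.} By hypothesis $h^{ab}$ is compatible with some temporal metric $t_a$; fix one. As noted in the proof of Proposition \ref{rotation}, a unit timelike vector field $\eta^a$ exists (take $\eta^a = k\,g^{ab}t_b$ for any Riemannian metric $g_{ab}$, with $k$ chosen so that $\eta^a t_a = 1$), and there exists a derivative operator $\nabla$ on $M$ compatible with $t_a$ and $h^{ab}$. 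Let $\hat{h}_{ab}$ be the spatial projector relative to $\eta^a$ (the unique symmetric field with $\hat{h}_{ab}\eta^b = \mathbf{0}$ and $h^{ab}\hat{h}_{bc} = \delta^a{}_c - \eta^a t_c$), and write $\hat{h}^a{}_b = \delta^a{}_b - \eta^a t_b$ for the associated projection operator, which is the identity on spacelike vectors and satisfies $t_a\hat{h}^a{}_b = \mathbf{0}$. Given a smooth tensor field $\alpha$ with only spatial indices, extend it to a field $\tilde{\alpha}$ with ordinary indices (lines under contravariant indices are simply dropped; each covariant spatial index requires a choice of extension off the spacelike subspaces), apply $\nabla$, and project every index back onto spacelike character with $\hat{h}$; write the result $D_{\underline{x}}\alpha$. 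Restricted to a spacelike vector field $\sigma^{\underline{a}}$ this reads $D_{\underline{n}}\sigma^{\underline{a}} = \hat{h}_{\underline{n}m}\hat{h}^{\underline{a}}{}_x\nabla^m\sigma^x$, the formula used in the proof of Proposition \ref{rotation}.

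Two points then need checking. First, $D_{\underline{x}}\alpha$ does not depend on the chosen extension $\tilde{\alpha}$: since the annihilator of the spacelike subspace at each point is spanned by $t_a$, any two extensions of a covariant spatial index differ by a term $f\,t_b$ with $f$ smooth, and because $\nabla_a t_b = \mathbf{0}$ we have $\nabla_n(f\,t_b) = (\nabla_n f)\,t_b$, which is killed on projecting since $t_a\hat{h}^a{}_b = \mathbf{0}$. Second, one verifies that $D$ satisfies the five axioms for a (torsion\nobreakdash-free) spatial derivative operator; each follows routinely from the corresponding property of $\nabla$ together with the facts that the projectors are idempotent and act as the identity on spatial indices and that $t_{\underline{a}} = \mathbf{0}$. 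Finally, $D_{\underline{a}}h^{\underline{b}\underline{c}}$ is the projection of $\nabla_a h^{bc} = \mathbf{0}$, hence vanishes; this establishes existence.

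\emph{Uniqueness.} Suppose $D$ and $D'$ are spatial derivative operators with $D_{\underline{a}}h^{\underline{b}\underline{c}} = D'_{\underline{a}}h^{\underline{b}\underline{c}} = \mathbf{0}$. Exactly as for ordinary derivative operators (\citet[\S1.7]{MalamentGR}; the argument goes through verbatim using only the spatial axioms), there is a smooth field $C^{\underline{a}}{}_{\underline{b}\underline{c}}$ expressing the difference of the two operators on each index in the usual way, and axiom (5) forces $C^{\underline{a}}{}_{[\underline{b}\underline{c}]} = \mathbf{0}$. Applying $D$ and $D'$ to $h^{\underline{b}\underline{c}}$ and subtracting gives $C^{\underline{b}}{}_{\underline{a}\underline{m}}h^{\underline{m}\underline{c}} + C^{\underline{c}}{}_{\underline{a}\underline{m}}h^{\underline{m}\underline{b}} = \mathbf{0}$. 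On the spacelike subspaces $h^{\underline{a}\underline{b}}$ is a positive-definite inner product whose inverse is $\hat{h}_{\underline{a}\underline{b}}$, so we may lower the contravariant indices of $C$ with $\hat{h}$; setting $C_{\underline{a}\underline{b}\underline{c}} = \hat{h}_{\underline{a}\underline{d}}C^{\underline{d}}{}_{\underline{b}\underline{c}}$, the last identity becomes $C_{\underline{b}\underline{a}\underline{c}} + C_{\underline{c}\underline{a}\underline{b}} = \mathbf{0}$, and combined with symmetry in the last two slots the standard index manipulation (the same one used for the Levi-Civita connection) gives $C_{\underline{a}\underline{b}\underline{c}} = -C_{\underline{a}\underline{b}\underline{c}}$, so $C = \mathbf{0}$ and hence $D = D'$.

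The only step that is not a mechanical transcription of the Riemannian case is the well-definedness of $D$ in the existence half: checking that the projection construction depends only on $\alpha$ as a spatial-index field and genuinely returns a spatial-index field. Once the identities $\nabla_a t_b = \mathbf{0}$ and $t_a\hat{h}^a{}_b = \mathbf{0}$ are in hand this is quick, but it is precisely where the restricted domain of a spatial derivative operator does real work; the axiom checks and the Koszul-type manipulation are otherwise routine.
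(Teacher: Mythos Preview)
Your proof is correct and follows essentially the same route as the paper's: project a compatible covariant derivative operator through $\hat{h}$ to build $D$, then run the Koszul-type index manipulation on the difference tensor $C^{\underline{a}}{}_{\underline{b}\underline{c}}$ to establish uniqueness. The only cosmetic differences are that the paper requires of its auxiliary $\nabla$ only that $\nabla^a h^{bc}=\mathbf{0}$ and handles covariant spatial indices by first raising with $h^{ab}$ (thereby sidestepping the extension-independence issue you address explicitly via $\nabla_a t_b=\mathbf{0}$), and in the uniqueness half it first derives $D_{\underline{n}}\hat{h}_{\underline{a}\underline{b}}=\mathbf{0}$ and runs the permutation argument on $\hat{h}$ rather than on $h$; these are equivalent since $h^{\underline{a}\underline{b}}$ and $\hat{h}_{\underline{a}\underline{b}}$ are mutual inverses on spatial indices.
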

\begin{proof}First we establish existence.  Fix some temporal metric $\tau_a$ compatible with $h^{ab}$.  Let $\eta^a$ be a smooth vector field on $M$ everywhere satisfying $\eta^a\tau_a=1$.  (As noted in the proof of Prop. \ref{rotation}, such fields always exist.)  Now, choose any derivative operator $\nabla$ on $M$ such that $\nabla^a h^{bc}=\mathbf{0}$.  (Again, such exist: choose an arbitrary derivative operator $\tilde{\nabla}$ and let $\nabla=(\tilde{\nabla}, C^a{}_{bc})$, where $C^a{}_{bc}=-\frac{1}{2}\left(h^{ar}\tilde{\nabla}_r h^{mn} -h^{nr}\tilde{\nabla}_r h^{am} - h^{mr}\tilde{\nabla}_r h^{na}\right)\hat{h}_{bm}\hat{h}_{cn}$.  Here, as in the proof of Prop. \ref{rotation}, $\hat{h}_{ab}$ is the spatial projector determined by $\eta^a$.)  Finally, define $D$ such that its action, on any scalar field, $\alpha$, is given by $D_{\underline{a}}\alpha=d_{\underline{a}}\alpha$; its action on any spacelike vector field $\xi^{\underline{a}}$, is given by $D_{\underline{n}}\xi^{\underline{a}}=\hat{h}_{\underline{n}m}\hat{h}^{\underline{a}}{}_{x}\nabla^{m}\xi^x$; its action on any spatial covector field $\lambda_{\underline{a}}$ is $\hat{h}_{\underline{n}m}\hat{h}^{\underline{a}}{}_{x}\nabla^{m}h^{x\underline{y}}\lambda_{\underline{y}}$; and its action on arbitrary spatial tensor fields is determined similarly.  Then $D$ inherits from $\nabla$ all of the properties necessary to be a spatial derivative operator; and $D_{\underline{a}}h^{\underline{b}\underline{c}}=\hat{h}_{\underline{n}m}\hat{h}^{\underline{a}}{}_{x}\hat{h}^{\underline{b}}{}_{y}\nabla^{m}h^{xy}=\mathbf{0}$.

Now we establish uniqueness.  First, I claim that given any two spatial derivative operators $D$ and $D'$, there exists a (unique) smooth spatial tensor field $C^{\underline{a}}{}_{\underline{b}\underline{c}}$, symmetric in $\underline{b}$ and $\underline{c}$, such that for any smooth spatial vector field $\xi^{\underline{a}}$, $(D_{\underline{n}} - D'_{\underline{n}})\xi^{\underline{a}} = -C^{\underline{a}}{}_{\underline{n}\underline{m}}\xi^{\underline{m}}$, and for any smooth spatial covector field $\lambda_{\underline{a}}$, $(D_{\underline{n}} - D'_{\underline{n}})\lambda_{\underline{a}} = C^{\underline{m}}{}_{\underline{n}\underline{a}}\lambda_{\underline{m}}$.  (The argument for this follows standard arguments for similar results very closely, and is suppressed; see for instance \citep[Prop. 1.7.3]{MalamentGR}.)  Now suppose that $D_{\underline{a}} h^{\underline{b}\underline{c}} = D'_{\underline{a}}h^{\underline{b}\underline{c}}=\mathbf{0}$.  It follows that $D_{\underline{n}}\hat{h}_{\underline{a}\underline{b}}=D'_{\underline{n}}\hat{h}_{\underline{a}\underline{b}}=\mathbf{0}$, since $\mathbf{0} = D_{\underline{n}}\delta^{\underline{a}}{}_{\underline{b}}=D_{\underline{n}}(h^{\underline{a}\underline{m}}\hat{h}_{\underline{m}\underline{b}})= h^{\underline{a}\underline{m}}D_{\underline{n}}\hat{h}_{\underline{m}\underline{b}}$.  (Here $\delta^{\underline{a}}{}_{\underline{b}}$ is the index substitution operator for spacelike vectors; that $h^{\underline{a}\underline{b}}\hat{h}_{\underline{b}\underline{c}}=\delta^{\underline{a}}{}_{\underline{c}}$ follows from the definition of $\hat{h}_{ab}$ and the fact that $t_{\underline{a}}=\mathbf{0}$.)  But since $h^{\underline{a}\underline{m}}$ is invertible on spatial vectors, this can vanish only if $D_{\underline{n}}\hat{h}_{\underline{m}\underline{b}}=\mathbf{0}$. Likewise for $D'$.  (Observe that this holds for any field $\hat{h}_{\underline{a}\underline{b}}$ determined by a timelike vector field $\eta^a$ as above.)  So we have $(D_{\underline{n}} - D'_{\underline{n}})\hat{h}_{\underline{a}\underline{b}} = C^{\underline{m}}{}_{\underline{n}\underline{a}}\hat{h}_{\underline{m}\underline{b}}+C^{\underline{m}}{}_{\underline{n}\underline{b}}\hat{h}_{\underline{m}\underline{a}}=\mathbf{0}$. But then it also holds that $C^{\underline{m}}{}_{\underline{b}\underline{n}}\hat{h}_{\underline{m}\underline{a}}+C^{\underline{m}}{}_{\underline{b}\underline{a}}\hat{h}_{\underline{m}\underline{n}}=\mathbf{0}$ and $ C^{\underline{m}}{}_{\underline{a}\underline{b}}\hat{h}_{\underline{m}\underline{n}}+C^{\underline{m}}{}_{\underline{a}\underline{n}}\hat{h}_{\underline{m}\underline{b}}=\mathbf{0}$. Now subtracting the second two equations from the first yields that $C^{\underline{m}}{}_{\underline{b}\underline{a}}\hat{h}_{\underline{m}\underline{n}}=\mathbf{0}$, which can hold only if $C^{\underline{a}}{}_{\underline{b}\underline{c}}=\mathbf{0}$.  It follows that $D$ and $D'$ agree on all smooth spatial tensor fields.\end{proof}

\section*{Acknowledgments}
This material is based upon work supported by the National Science Foundation under Grant No. 1331126.  I am grateful to David Malament and Sarita Rosenstock for helpful conversations related to this paper and for comments on a previous version; and to Eleanor Knox for prompting me to write it.  I apologize to the audience at the meeting The Philosophy of Howard Stein and Its Contemporary Influence, held at the University of Chicago, for blurting this out during the question period following Eleanor's talk.

\singlespacing

\end{document}